\newtheorem*{claim*}{Claim}
\newtheorem{prop}{Proposition}\def\PRO{\begin{prop}}\def\ORP{\end{prop}}
\newtheorem{coro}{Corollary}\def\COR{\begin{coro}}\def\ROC{\end{coro}}
\newtheorem{theo}{Theorem}\def\TH{\begin{theo}}\def\HT{\end{theo}}
\def\TH{\begin{theo}}\def\HT{\end{theo}}
\newtheorem{defi}[prop]{Definition}\def\DE{\begin{defi}}\def\ED{\end{defi}}
\newtheorem{lemme}[prop]{Lemma}\def\LE{\begin{lemme}}\def\EL{\end{lemme}}
\def\ket#1{\left| #1 \right\rangle}
\def\bra#1{\left\langle #1 \right|}
\newcommand{\beq}{\begin{equation}}
\newcommand{\eeq}{\end{equation}}
\definecolor{pink}{RGB}{255,0,255}
\definecolor{ss_color}{rgb}{0,0,1}
\definecolor{darkorange}{RGB}{255,120,0} 
\definecolor{red}{rgb}{1,0,0}
\begin{document}

\title{Zero-error attack against coherent-one-way quantum key distribution}
\author{R\'obert Tr\'enyi}
\affiliation{Escuela de Ingenier\'ia de Telecomunicaci\'on, Department of Signal Theory and Communications, University of Vigo, Vigo E-36310, Spain}
\author{Marcos Curty}
\affiliation{Escuela de Ingenier\'ia de Telecomunicaci\'on, Department of Signal Theory and Communications, University of Vigo, Vigo E-36310, Spain}

\begin{abstract}
Coherent-one-way (COW) quantum key distribution (QKD) held the promise of distributing secret keys over long distances with a simple experimental setup. Indeed, this scheme is currently used in commercial applications. Surprisingly, however, it has been recently shown that its secret key rate scales at most quadratically with the system's transmittance and, thus, it is not appropriate for long distance QKD transmission. Such pessimistic result was derived by employing a so-called zero-error attack, in which the eavesdropper does not introduce any error, but still the legitimate users of the system cannot distill a secure key. Here, we present a zero-error attack against COW-QKD that is essentially optimal, in the sense that no other attack can restrict further its maximum achievable distance in the absence of errors. This translates into an upper bound on its secret key rate that is more than an order of magnitude lower than previously known upper bounds. 
\end{abstract}

\maketitle

\section{Introduction}

Quantum key distribution (QKD)~\cite{qkd1,qkd2} is probably the most mature quantum technology today, with QKD networks being deployed worldwide~\cite{net1,net2,net3,net4}. These networks permit pairs of distant users (say Alice and Bob) to generate information-theoretic secure cryptographic keys, which can be used to achieve perfectly secure communications via the one-time-pad cryptosystem~\cite{vernam}. 

Until quantum repeaters~\cite{rep1,rep2,rep3,rep4} are experimentally available, QKD networks typically rely on a trusted-node architecture. This is so because channel loss poses strong limitations on the secret key rate that can be achieved with point-to-point QKD. Indeed, this key rate scales at best linearly with the system's transmittance $\eta$~\cite{TGW,PLOB}, which in fibre-based links decreases exponentially with the distance. Recent years have witnessed a tremendous effort to develop QKD protocols able to deliver such best possible key rate scaling with practical signals. 

The most popular solution to achieve a key rate of order $O(\eta)$ with laser sources is undoubtedly decoy-state QKD~\cite{decoy2,decoy3,decoy1}, where Alice randomly varies the intensity of the optical pulses she sends to Bob. This scheme has been recently used to distribute secret keys over a world record distance of 421 km with optical fibre~\cite{record_decoy}. Alternative approaches include QKD with strong reference pulses~\cite{strong1,strong2,strong3}, and distributed-phase-reference (DPR) QKD. While it has been proven that the  key rate of the former scales also linearly with $\eta$, for DPR-QKD this has been demonstrated only against restricted types of attacks~\cite{cow4}. 

The key advantage of DPR-QKD when compared to other solutions is its simple experimental implementation. There are two main protocols: differential-phase-shift (DPS) QKD~\cite{dps1,dps2,dps3}, and coherent-one-way (COW) QKD~\cite{cow4,cow1,cow2,cow3}. Long distance implementations of both schemes have been reported recently, over 200 km~\cite{dps3} and 300 km~\cite{cow4}, respectively. Remarkably, there are even commercial systems based on COW-QKD~\cite{IdQ}. However, despite this significant progress, the security of DPR-QKD has yet to be fully established. For instance, it has been shown that DPS-QKD can offer a key rate of order $O(\eta^{3/2})$ in the long distance regime given that the error rate is kept sufficiently small~\cite{dps5,dps6}.  Also, it turns out that a key rate of order almost $O(\eta)$ is possible when the receiver checks the coherence between randomly chosen incoming signals~\cite{dps4,dps7,dps8}. Surprisingly, however, for COW-QKD it has been recently proven that its key rate scales at most quadratically with $\eta$~\cite{cow_attack}, which renders this scheme inappropriate for long distance QKD transmission. Indeed, this result matches the scaling of the lower security bound derived in~\cite{low_cow}. 

This pessimistic result was derived by employing a special type of sequential attack~\cite{seq1,seq2,seq3,gain}, which is called a zero-error attack~\cite{cow_attack,upp_cow2} because it does not introduce errors. In a sequential attack, the eavesdropper (Eve) first measures out all the signals sent by Alice one-by-one, and then she sends Bob signals which depend on all her measurement results. Importantly, when the measurement statistics observed by Alice and Bob are compatible with a sequential attack, they cannot distill a secret key. This is so because sequential attacks transform the quantum channel into an entanglement breaking channel and, thus, they do not allow Alice and Bob to establish quantum correlations between them~\cite{condition}. 

Here, we present a zero-error attack against COW-QKD which is essentially optimal, in the sense that no other attack can restrict further its maximum achievable distance in the absence of errors. That is, the attack introduced below provides the highest possible value of the system's transmittance underneath which COW-QKD is insecure. For example, it can be shown that if Alice and Bob use state-of-the-art devices and the intensity of Alice's signals is similar to that employed in decoy-state QKD, the maximum achievable distance of COW-QKD is only about $22$~km. Our findings translate into an upper bound on the secret key rate of COW-QKD that is more than an order of magnitude lower than previously known upper bounds~\cite{cow_attack,upp_cow2,upp_cow1,filteringattack}. 

The paper is structured as follows. In Sec.~\ref{Sec:COW} we introduce COW-QKD. Then, in Sec.~\ref{Sec:Attack} we present our zero-error attack against this scheme in detail. This attack uses an unambiguous state discrimination (USD) measurement~\cite{chefles_usd1,chefles_usd2,eldar1}, which we introduce in Sec.~\ref{meas}. Next, in Sec.~\ref{Gth}, we obtain the maximum value of the gain at Bob's side that is achievable with the zero-error attack. This quantity is directly related to the maximum value of the system's transmittance below which COW-QKD is insecure. In Sec.~\ref{Sec:Evaluation}, we evaluate the implications of these results on the distance and secret key rate that can be achieved with COW-QKD, and we compare our results with other zero-error attacks. Finally, we discuss possible countermeasures against zero-error attacks in Sec.~\ref{Sec:Discussion}, and conclude the paper with a summary in Sec.~\ref{Sec:Conclusion}. The paper also includes two appendixes with additional calculations. 

\section{Coherent-one-way quantum key distribution}\label{Sec:COW} 

The layout of COW-QKD~\cite{cow4,cow1,cow2,cow3} is depicted in Fig.~\ref{COW}. 
\begin{figure}
\centering{\includegraphics*[scale=0.42]{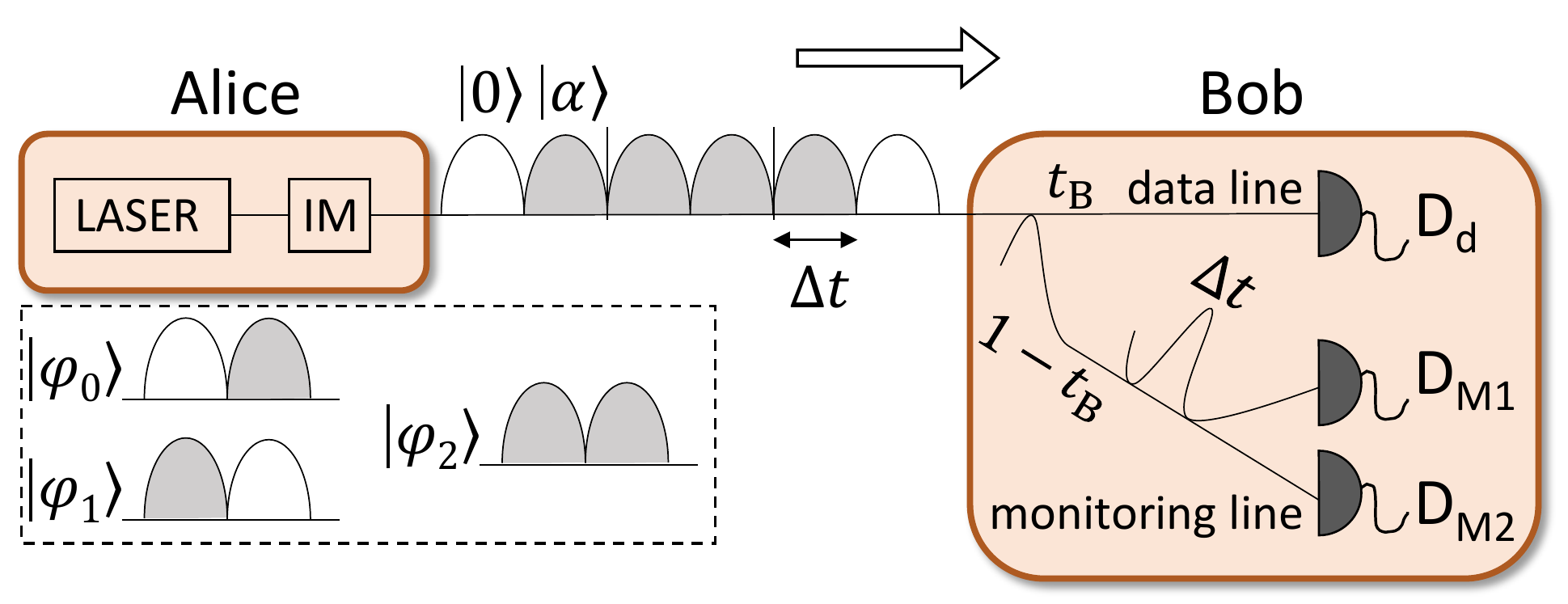}}
\caption{Schematics of COW-QKD. Alice sends Bob a random sequence of signals $\ket{\varphi_i}$, with $i=0,\ldots,2$. Bob employs a beamsplitter with transmittance $t_{\rm B}$ to passively distribute the incoming signals between the data and the monitoring lines. The detection events at the data line constitute the raw key, while the monitoring line is used to detect eavesdropping. This is done by means of a Mach-Zehnder interferometer that measures the coherence between adjacent pulses. In the figure, a grey (white) oval represents a coherent (vacuum) state $\ket{\alpha}$ ($\ket{0}$), IM is an intensity modulator, $\Delta{}t$ is the time delay between subsequent pulses, and D$_{\rm d}$, D$_{\rm M1}$ and D$_{\rm M2}$ are single-photon detectors.}
\label{COW}
\end{figure}
Alice sends Bob a random sequence of signals $\ket{\varphi_0}=\ket{0}\ket{\alpha}$, $\ket{\varphi_1}=\ket{\alpha}\ket{0}$ and $\ket{\varphi_2}=\ket{\alpha}\ket{\alpha}$, where $\ket{0}$ ($\ket{\alpha}$) represents a vacuum (coherent) state. She prepares these signals with a priori probabilities $p_{\ket{\varphi_0}}=p_{\ket{\varphi_1}}=(1-f)/2$, and $p_{\ket{\varphi_2}}=1-p_{\ket{\varphi_0}}-p_{\ket{\varphi_1}}=f$, with $f\in (0,1)$. A signal $\ket{\varphi_0}$ ($\ket{\varphi_1}$) encodes a bit value $0$ ($1$), while $\ket{\varphi_2}$ represents a decoy signal. 

At the receiving side, Bob uses a beamsplitter of transmittance $t_{\rm B}$ to passively distribute the incoming signals between the data and the monitoring lines. The data line allows him to distinguish the states $\ket{\varphi_0}$ and $\ket{\varphi_1}$ by measuring the time instance in which detector D$_{\rm d}$ provides a detection ``click'' (see Fig.~\ref{COW}). Precisely, Bob assigns a bit value $0$ ($1$) to a ``click'' in the first (second) time slot within a signal, while he assigns a random bit value to a double ``click''. These bits constitute his raw key. 

Next, Bob announces over an authenticated classical channel which signals produced a ``click'' in D$_{\rm d}$, but he does not reveal the particular time slots in which the detection ``clicks'' actually occurred. Also, Alice declares if Bob's observed ``clicks'' correspond to bit states ({\it i.e.}, to the states $\ket{\varphi_0}$ and $\ket{\varphi_1}$) or to decoy signals. The bits associated to the bit states constitute the sifted key.

The monitoring line, on the other hand, is used by Bob to measure the coherence between adjacent coherent states $\ket{\alpha}$ to detect for eavesdropping. This is achieved with a Mach-Zehnder interferometer followed by two detectors, D$_{\rm M1}$ and D$_{\rm M2}$. This interferometer is such that two adjacent states $\ket{\alpha}$ cannot produce a ``click'' in say detector D$_{\rm M2}$.

Errors in the data line are characterized by means of the quantum bit error rate (QBER), while errors in the monitoring line are characterized with the visibilities
\begin{equation}
{V_s} = \frac{{p_{\rm click}({{\rm{D_{M1}}}|s}) - p_{\rm click}({{\rm{D_{M2}}}|s})}}{{p_{\rm click}({{\rm{D_{M1}}}|s}) + p_{\rm click}({{\rm{D_{M2}}}|s})}}, \label{vis_gen}
\end{equation}
with $s\in{\mathcal S}=\{``d$''$, ``01$''$, ``0d$''$, ``d1$''$, ``dd$"$\}$. In Eq.~(\ref{vis_gen}), $p_{\rm click}({\rm D}_{{\rm M}i}|s)$ represents the conditional probability that detector ${\rm D}_{{\rm M}i}$ ``clicks'' given that the two adjacent coherent states $\ket{\alpha}$ are situated within a sequence $s$ sent by Alice. For example, a sequence $s=``01$'' represents a bit 1 signal followed by a bit 0 signal ({\it i.e.}, $\ket{\varphi_0}\ket{\varphi_1}$), and the other sequences are defined similarly. 

\section{Zero-error attack}\label{Sec:Attack}

The zero-error attack that we introduce below exploits two special properties of Alice's signals. First, they are linearly independent, and, second, they contain the vacuum state, which naturally breaks the coherence between adjacent pulses. Indeed, thanks to these two properties together, it is possible for Eve to perform a sequential attack, based on an USD measurement~\cite{chefles_usd1,chefles_usd2,eldar1}, which does not introduce any error in the data line nor in the monitoring line and still prevents Alice and Bob from distilling a secure key. 

Various zero-error attacks against COW-QKD have been proposed recently in~\cite{upp_cow2,cow_attack}. For instance, as already mentioned, the work in~\cite{cow_attack} uses a particular zero-error attack to show that the secret key rate of COW-QKD scales at most quadratically with the overall system's transmittance $\eta$. 

A crucial parameter in a zero-error attack is the maximum value of the gain at Bob's data line for which the attack is actually possible. We shall denote this parameter by $G_{\rm zero}$. Here, the gain is defined as the probability that Bob observes a detection ``click'' in his data line per signal state sent by Alice. Precisely, in a zero-error attack Eve first measures out each signal sent by Alice with a USD measurement, and then she essentially resends Bob those signals for which she obtained a successful measurement result. If the measurement outcome is inconclusive, Eve sends Bob vacuum states. This strategy naturally imposes a maximum value of the gain at Bob's data line that is achievable with a zero-error attack due to the inconclusive outcomes provided by Eve's measurement. Note that here we are considering the conservative {\it untrusted device} scenario. Therefore, a resent vacuum signal can never cause a detection ``click'' at Bob's side, as this scenario assumes that the detection efficiency and dark count rate of Bob's detectors can be controlled by Eve. We shall denote the maximum value of $G_{\rm zero}$ among all possible zero-error attacks by $G_{\rm zero}^{\rm max}$.

Next, we introduce a simple zero-error attack against COW-QKD for which $G_{\rm zero}$ can be made arbitrarily close to the optimal value $G_{\rm zero}^{\rm max}$. As we will show in detail in Sec.~\ref{Sec:UpperBound}, this translates into an upper bound on the secret key rate of COW-QKD that is more than an order of magnitude lower than previous results~\cite{upp_cow2,cow_attack}.

The key idea of the attack is rather elementary; it goes as follows. First, Eve measures the signals sent by Alice one by one with an optimal USD measurement that maximizes her probability to obtain a conclusive result. This measurement is described in detail in Sec.~\ref{meas}. Let $k$ denote the number of consecutive conclusive measurement results obtained by Eve with her measurement. If $k=0$, {\it i.e.} when the signal measured by Eve results in an inconclusive output, Eve directly replaces this signal with a vacuum signal $\ket{\varphi_{\rm vac}}=\ket{0}\ket{0}$ and sends $\ket{\varphi_{\rm vac}}$ to Bob. Moreover, she restarts the counting of the number of consecutive conclusive measurement results again from the next signal on. We note that this latter step is common to all the other cases that we discuss below and we omit it in their description for simplicity. 

If $k=1$, {\it i.e.} when the first signal measured by Eve results in a conclusive output but the following signal results in an inconclusive output, then Eve replaces these two signals with two vacuum signals $\ket{\varphi_{\rm vac}}$, she sends these latter states to Bob. 

If $2\leq k < M_{\rm max}$, {\it i.e.} when Eve obtained $k$ consecutive conclusive measurement results and the following signal is an inconclusive output, then she replaces this latter signal with a vacuum state $\ket{\varphi_{\rm vac}}$. Here, $M_{\rm max}$ is a pre-fixed value that Eve can select arbitrarily large. Moreover, Eve looks for the longest sub-block, within the block of $k$ correctly identified signals, which is surrounded by vacuum {\it pulses}. Then, she replaces the signals that do not belong to this sub-block with vacuum signals $\ket{\varphi_{\rm vac}}$. Also, she replaces the coherent states $\ket{\alpha}$ within the signals that belong to the successful sub-block, with coherent states $\ket{\beta}$ satisfying $\beta\gg\alpha$. We note that by selecting $\beta$ large enough, Eve can guarantee that each signal within the successful sub-block will produce a detection ``click'' at Bob's side with nearly unit probability. Alternatively, Eve could also replace the coherent states $\ket{\alpha}$ with signals that do not contain the vacuum state. However, here we prefer to use coherent states $\ket{\beta}$ just for simplicity. If a block of $k$ signals correctly identified by Eve does not contained a sub-block surrounded by vacuum pulses, then Eve replaces all the signals within the block with vacuum signals. That is, she sends Bob $k+1$ vacuum signals $\ket{\varphi_{\rm vac}}$. 

Finally, if $k=M_{\rm max}$, {\it i.e.} when Eve obtained $M_{\rm max}$ consecutive conclusive measurement results, then she directly replaces the following signal ({\it i.e.}, the signal that is located in the position $M_{\rm max}+1$) with a vacuum signal $\ket{\varphi_{\rm vac}}$. Moreover, she post-processes the block of $M_{\rm max}$ consecutive conclusive measurement results like in the previous case. That is, she looks for the longest sub-block that is surrounded by vacuum pulses within the block, and all signals that do not belong to such sub-block are replaced with vacuum signals $\ket{\varphi_{\rm vac}}$. Also, she replaces the coherent states $\ket{\alpha}$ within the successful sub-block with coherent states $\ket{\beta}$, and sends the new block of $M_{\rm max}+1$ signals to Bob. 

It is clear that the attack above is indeed a zero-error attack. Note that the resulting QBER is zero because Eve never misidentifies a signal sent by Alice. Moreover, the attack also preserves the coherence between those original adjacent coherent states $\ket{\alpha}$ prepared by Alice and which are resent to Bob by Eve. This guarantees that Bob will obtain $V_s=1$ for $\forall s\in{\mathcal S}$. 

What is more, as already mentioned, the gain $G_{\rm zero}$ achieved with the attack above can be made arbitrarily close to $G_{\rm zero}^{\rm max}$ by simply selecting the parameter $M_{\rm max}$ large enough. This is so because, in such regime, all signals which are correctly identified by the optimal USD measurement, and which do not reduce the visibility at Bob's side, actually produce a detection ``click'' in his data line. Here, it is important to note that, since Alice prepares her signals at random and independently of each other, measuring these signals individually (as it is done in the attack above) is not disadvantageous for Eve when compared to a possible USD strategy based on joint measurements. This is so because, due to the independence of the signals, to unambiguously discriminate a group of signals (from other group of signals) each individual signal has to be discriminated unambiguously. Moreover, in practice, even relatively small values of $M_{\rm max}$ ({\it e.g.} $M_{\rm max}=10$, which is the value considered in the simulations shown in Sec.~\ref{Sec:Evaluation}) are sufficient to basically achieve $G_{\rm zero}^{\rm max}$. This is due to the fact that, when $\alpha$ is small (as is typically the case in experimental implementations of COW-QKD), the probability that Eve obtains more than $M_{\rm max}$ consecutive conclusive measurement results is essentially negligible even for moderate values of $M_{\rm max}$. 

\section{USD measurement}\label{meas}

In this section, we now describe Eve's optimal USD measurement, and provide an analytical expression for the maximum probability to obtain a conclusive result, which we shall denote by $p_{\rm c}$. 

Precisely, Eve's measurement contains four measurement operators $E_j\geq0$ satisfying $\sum_{j=0}^3 E_j=\openone$, with $\openone$ denoting the identity operator. A result associated to $E_j$, with $j=0,1,2$, unambiguously identifies the state $\ket{\varphi_j}$ sent by Alice, while a result associated to $E_3$ corresponds to an inconclusive result. 

Let $p_{j|i}=\bra{\varphi_i}E_j\ket{\varphi_i}$ denote the conditional probability that Eve obtains the result $j$ given that Alice sent the state $\ket{\varphi_i}$. Since we are considering a USD measurement, we have that $p_{j|i}=0$ $\forall i\neq{}j$ with $i,j=0,1,2$. Moreover, since Alice's signals $\ket{\varphi_0}$ and $\ket{\varphi_1}$ are sent with the same a priori probability, it can be shown that the optimal USD measurement satisfies $p_{0|0}=p_{1|1}\equiv{}q^{\rm s}_{\rm s}$. These probabilities are illustrated in Table~\ref{table1}.
\begin{table}
  \centering
  \begin{tabular}{lcccc}
    \quad & \multicolumn{4}{c}{Eve's POVM elements}\\
    \hline\hline
  Alice's signal & \quad $E_0$ \quad & \quad $E_1$ \quad & \quad $E_2$ \quad & \ \quad $E_3$  \ \quad \\
  \hline
\quad\quad$\ket{\varphi_0}$   & \quad $q^{\rm s}_{\rm s}$ \quad & \quad 0 \quad  & \quad 0 \quad  & $q^{\rm s}_{\rm inc}$   \quad  \\
\quad\quad$\ket{\varphi_1}$   & \quad 0 \quad & \quad $q^{\rm s}_{\rm s}$ \quad  & \quad 0 \quad  & $q^{\rm s}_{\rm inc}$ \quad  \\
\quad\quad$\ket{\varphi_2}$   & \quad 0 \quad & \quad 0 \quad  & \quad $q^{\rm d}_{\rm s}$ \quad  & $q^{\rm d}_{\rm inc}$  \quad  \\ 
 \hline\hline
\end{tabular}
\caption{Conditional probabilities associated to Eve's USD measurement. For convenience, we use the labels $q^{\rm s}_{\rm s}$, $q^{\rm d}_{\rm s}$, $q^{\rm s}_{\rm inc}$ and $q^{\rm d}_{\rm inc}$ to denote the different conditional probabilities. They satisfy $q^{\rm s}_{\rm s}+q^{\rm s}_{\rm inc}=q^{\rm d}_{\rm s}+q^{\rm d}_{\rm inc}=1$.
  }\label{table1}
\end{table}

We find, therefore, that $p_{\rm c}$ satisfies
\begin{equation}\label{qs_qf_qinc}
p_{\rm c}=\sum_{i=0}^2 p_{\ket{\varphi_i}}p_{i|i}=(1-f)q^{\rm s}_{\rm s}+fq^{\rm d}_{\rm s},
\end{equation}
while the probability to obtain an inconclusive result is simply given by $p_{\rm inc}=1-p_{\rm c}$. In Eq.~(\ref{qs_qf_qinc}), we use the values for the probabilities $p_{\ket{\varphi_i}}$ that are given in Sec.~\ref{Sec:COW}, and we use the notation introduced in Table~\ref{table1} for the probabilities $p_{i|i}$. 

Now, to obtain the maximum possible value of $p_{\rm c}$, we follow the techniques introduced in~\cite{sugimoto}. The result is given by the following Claim. 
\begin{claim*}
The maximum probability $p_{\rm c}$ to obtain a conclusive result when unambiguously discriminating the signals $\ket{\varphi_0}=\ket{0}\ket{\alpha}$, $\ket{\varphi_1}=\ket{\alpha}\ket{0}$ and $\ket{\varphi_2}=\ket{\alpha}\ket{\alpha}$, which are sent with a priori probabilities $p_{\ket{\varphi_0}}=p_{\ket{\varphi_1}}=(1-f)/2$ and $p_{\ket{\varphi_2}}=f$, with $f\in (0,1)$, is given by Eq.~(\ref{qs_qf_qinc}) with the conditional probabilities $q^{\rm s}_{\rm s}$ and $q^{\rm d}_{\rm s}$ satisfying: 
\\

If $\sqrt{\gamma}\leq e^{-|\alpha|^2/2}$, with $\gamma=f/[2(1-f)]$, then $q^{\rm s}_{\rm s}=1-e^{-|\alpha|^2}$ and $q^{\rm d}_{\rm s}=0$;
\\ 

If $\sqrt{\gamma}> e^{-|\alpha|^2/2}$ and $\cosh\left(|\alpha|^2/2\right)\geq \sqrt{\gamma}$ then $q^{\rm s}_{\rm s}=1+e^{-|\alpha|^2}-e^{-|\alpha|^2/2}\sqrt{2f/(1-f)}$ and $q^{\rm d}_{\rm s}=1-\sqrt{\gamma^{-1}}e^{-|\alpha|^2/2}$;

\quad
\\
Finally, if $\sqrt{\gamma}> e^{-|\alpha|^2/2}$ and $\cosh\left(|\alpha|^2/2\right)< \sqrt{\gamma}$ then $q^{\rm s}_{\rm s}=0$ and $q^{\rm d}_{\rm s}=\tanh\left(|\alpha|^2/2\right)$.
\end{claim*}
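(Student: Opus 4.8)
The plan is to recast the optimization as a small semidefinite problem over the three-dimensional subspace spanned by Alice's signals, and then solve it in closed form by exploiting the exchange symmetry between $\ket{\varphi_0}$ and $\ket{\varphi_1}$. \emph{Step 1 (reduction to a Gram-matrix problem).} Since $\ket{\varphi_0}$, $\ket{\varphi_1}$, $\ket{\varphi_2}$ are linearly independent, I would restrict attention to the three-dimensional subspace $\mathcal{H}_3$ they span. Following the structure theory of USD of linearly independent pure states and the techniques of Ref.~\cite{sugimoto}, the conclusive operators can be taken rank one and proportional to the reciprocal vectors $\ket{\tilde\varphi_j}\in\mathcal{H}_3$ defined by $\braket{\tilde\varphi_i}{\varphi_j}=\delta_{ij}$, i.e. $E_j=r_j\,\ket{\tilde\varphi_j}\!\bra{\tilde\varphi_j}$ with $r_j\ge0$ for $j=0,1,2$. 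Then $p_{j|i}=r_j\delta_{ij}$, so $p_{\rm c}=\sum_{i=0}^{2}p_{\ket{\varphi_i}}r_i$, and evaluating $E_3=\openone-\sum_j E_j\ge0$ between Alice's states shows that the only remaining feasibility condition is that $\mathbb{G}-\mathrm{diag}(r_0,r_1,r_2)$ be positive semidefinite, where $\mathbb{G}_{ij}=\braket{\varphi_i}{\varphi_j}$ is the Gram matrix of Alice's signals.

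\emph{Step 2 (symmetry and explicit positivity conditions).} Since $p_{\ket{\varphi_0}}=p_{\ket{\varphi_1}}$ and $\mathbb{G}$ is invariant under $0\leftrightarrow1$, the feasible set and the (linear) objective are symmetric under swapping $r_0$ and $r_1$; by convexity there is an optimum with $r_0=r_1\equiv q^{\rm s}_{\rm s}$, and I set $r_2\equiv q^{\rm d}_{\rm s}$. Taking $\alpha$ real without loss of generality and writing $x=e^{-|\alpha|^2/2}$, one has
\begin{equation}
\mathbb{G}=\begin{pmatrix} 1 & x^2 & x\\ x^2 & 1 & x\\ x & x & 1\end{pmatrix}.
\end{equation}
I would then diagonalize $\mathbb{G}-\mathrm{diag}(q^{\rm s}_{\rm s},q^{\rm s}_{\rm s},q^{\rm d}_{\rm s})$ in the basis adapted to the symmetry, namely $\{(\ket{\varphi_0}-\ket{\varphi_1})/\sqrt2,\,(\ket{\varphi_0}+\ket{\varphi_1})/\sqrt2,\,\ket{\varphi_2}\}$, using that $\ket{\varphi_0}-\ket{\varphi_1}$ is orthogonal to $\ket{\varphi_2}$. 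This produces the scalar eigenvalue $1-q^{\rm s}_{\rm s}-x^2$ and a $2\times2$ block, so positivity is equivalent to the three conditions $q^{\rm s}_{\rm s}\le1-e^{-|\alpha|^2}$, $q^{\rm d}_{\rm s}\le1$ and $[(1-q^{\rm s}_{\rm s})+e^{-|\alpha|^2}](1-q^{\rm d}_{\rm s})\ge2e^{-|\alpha|^2}$, together with $q^{\rm s}_{\rm s},q^{\rm d}_{\rm s}\ge0$.

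\emph{Step 3 (solving the two-variable optimization).} Because the objective $(1-f)q^{\rm s}_{\rm s}+fq^{\rm d}_{\rm s}$ is linear with positive coefficients, its maximum over this region is attained on the nonlinear boundary, which runs from $(q^{\rm s}_{\rm s},q^{\rm d}_{\rm s})=(0,\tanh(|\alpha|^2/2))$ to $(1-e^{-|\alpha|^2},0)$. Parametrizing that curve by $a=1-q^{\rm s}_{\rm s}\in[e^{-|\alpha|^2},1]$, with $q^{\rm d}_{\rm s}=1-2e^{-|\alpha|^2}/(a+e^{-|\alpha|^2})$, the objective becomes a strictly concave function of the single variable $a$ with stationary point $a^\star=2e^{-|\alpha|^2/2}\sqrt{\gamma}-e^{-|\alpha|^2}$. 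The three cases of the Claim then correspond to: $a^\star\le e^{-|\alpha|^2}$, where the optimum sits at $a=e^{-|\alpha|^2}$ and gives $q^{\rm s}_{\rm s}=1-e^{-|\alpha|^2}$, $q^{\rm d}_{\rm s}=0$; $a^\star\in[e^{-|\alpha|^2},1]$, the interior optimum; and $a^\star>1$, where the optimum sits at $a=1$ and gives $q^{\rm s}_{\rm s}=0$, $q^{\rm d}_{\rm s}=(1-e^{-|\alpha|^2})/(1+e^{-|\alpha|^2})=\tanh(|\alpha|^2/2)$. It then remains to observe that $a^\star\le e^{-|\alpha|^2}$ is equivalent to $\sqrt{\gamma}\le e^{-|\alpha|^2/2}$, that $a^\star\le1$ is equivalent to $\sqrt{\gamma}\le\cosh(|\alpha|^2/2)$, and that $2\sqrt{\gamma}=\sqrt{2f/(1-f)}$; substituting these identities into the interior solution reproduces the expressions in the statement.

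\emph{Expected obstacle.} The substantive step is Step 1: establishing that the optimal USD measurement has the reciprocal-vector form and that the whole feasibility region collapses to the single linear matrix inequality $\mathbb{G}-\mathrm{diag}(r_0,r_1,r_2)\ge0$, which is precisely where the apparatus of Ref.~\cite{sugimoto} is used. The remainder is elementary but bookkeeping-intensive; the points that require care are checking that imposing $r_0=r_1$ entails no loss of generality, and verifying in each of the three regimes that the candidate pair $(q^{\rm s}_{\rm s},q^{\rm d}_{\rm s})$ is genuinely feasible (in particular that $q^{\rm s}_{\rm s},q^{\rm d}_{\rm s}\ge0$ and $q^{\rm s}_{\rm s}\le1-e^{-|\alpha|^2}$ hold), so that the case boundaries coincide with those in the Claim.
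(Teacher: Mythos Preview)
Your proposal is correct and essentially equivalent in spirit to the paper's proof, but the execution differs in a useful way. Both routes start from the same reduction (Step~1): the USD feasibility for three linearly independent pure states collapses to the linear matrix inequality $\mathbb{G}-\mathrm{diag}(r_0,r_1,r_2)\ge0$, which is exactly the machinery of Ref.~\cite{sugimoto}. From there, the paper simply invokes the closed-form case-by-case solution of Ref.~\cite{sugimoto}: it computes the auxiliary parameters $\alpha_1,\alpha_2,\alpha_3$, reads off the ``triangle-type'' conditions (\ref{cond1_a})--(\ref{cond1_c}), and in the third regime reduces to a two-state USD subproblem before translating back. You instead solve the $3\times3$ SDP by hand: the exchange symmetry $0\leftrightarrow1$ plus convexity justifies $r_0=r_1$ a priori (the paper only observes $p_{0|0}=p_{1|1}$ a posteriori from the formulas), the symmetry-adapted basis splits the constraint into a scalar and a $2\times2$ block, and a one-variable concave maximization along the active boundary yields the three regimes directly. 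Your approach is more self-contained and makes the origin of the case boundaries $\sqrt{\gamma}\lessgtr e^{-|\alpha|^2/2}$ and $\sqrt{\gamma}\lessgtr\cosh(|\alpha|^2/2)$ transparent; the paper's approach is shorter if one is willing to quote Ref.~\cite{sugimoto} as a black box, and it dispatches the third regime via the two-state reduction rather than via the endpoint $a=1$ of your parametrization. The algebraic checks you outline (feasibility of each candidate and the identity $2\sqrt{\gamma}=\sqrt{2f/(1-f)}$) are straightforward and match the paper's expressions exactly.
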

\begin{proof}[Proof.]
  The proof of the Claim is provided in Appendix~\ref{proof}. 
\end{proof}

We note that in most experiments~\cite{cow4,cow3}, we have typically that $f\leq 0.15$ and $|\alpha|^2\leq 0.5$. This means that the first condition in the Claim is usually satisfied.

\section{Gain $G_{\rm zero}$}\label{Gth}

In this section we calculate $G_{\rm zero}$ for the zero-error attack introduced in Sec.~\ref{Sec:Attack}. As already explained in that section, this parameter represents the maximum value of the gain at Bob's data line that is achievable with Eve's zero-error attack. That is, whenever the observed gain of an experimental implementation of COW-QKD is below $G_{\rm zero}$, the protocol is insecure~\cite{condition}.

Our starting point is the definition of the gain at Bob's data line, which can be expressed as $G=N_{\rm clicks}/N$. Here, $N_{\rm clicks}$ is the total number of clicks observed by Bob in his data line, and $N$ is the total number of signals sent by Alice. In the asymptotic limit where $N$ tends to infinity, $N_{\rm clicks}$ can be written as $N_{\rm clicks}=(N/N^{\rm av})N_{\rm clicks}^{\rm av}$, where $N^{\rm av}$ denotes the average length of the blocks of signals that Eve sends to Bob, and $N_{\rm clicks}^{\rm av}$ represents the average number of ``clicks'' observed by Bob in his data line due to these blocks of signals~\cite{cow_attack,gain}. With this notation, one can rewrite the gain as
\begin{equation}\label{gain1}
G=\frac{N_{\rm clicks}^{\rm av}}{N^{\rm av}}.
\end{equation}
Next, we calculate the parameters $N^{\rm av}$ and $N_{\rm clicks}^{\rm av}$. 

Let us start with $N^{\rm av}$. The a priori probability that Eve sends Bob a block containing $k+1$ signals, with $k=2,\ldots,M_{\rm max}$, in which the first $k$ signals provided Eve with a conclusive measurement result and the last one is a vacuum signal $\ket{\varphi_{\rm vac}}$, is given by
\begin{equation}\label{psk}
p_{\rm s}(k) = \left\{ \begin{array}{ll}
p_{\rm c}^k(1-p_{\rm c}) & \textrm{if $2\leq k<M_{\rm max}$,}\\
p_{\rm c}^{M_{\rm max}} & \textrm{if $k=M_{\rm max}$,}\\
0 & \textrm{otherwise,}
  \end{array} \right.
\end{equation}
with the probability $p_{\rm c}$ having the form of Eq.~(\ref{qs_qf_qinc}).

Similarly, the probability that Eve sends Bob a block with $k+1$ vacuum signals $\ket{\varphi_{\rm vac}}$, with $k=0,1$, is given by
\begin{equation}\label{pvk}
p_{\rm v}(k) = \left\{ \begin{array}{ll}
p_{\rm c}^k(1-p_{\rm c}) & \textrm{if $0\leq{}k\leq{}1$,}\\
0 & \textrm{otherwise.}
  \end{array} \right.
\end{equation}

We find, therefore, that $N^{\rm av}$ can be written as
\begin{equation}\label{NE}
N^{\rm av}=\sum_{k=0}^{1} p_{\rm v}(k) (k+1)+\sum_{k=2}^{M_{\rm max}} p_{\rm s}(k) (k+1).
\end{equation}
By substituting Eqs.~(\ref{psk})-(\ref{pvk}) into Eq.~(\ref{NE}) we obtain that 
\begin{equation}\label{NE_final}
N^{\rm av}=\frac{1-p_{\rm c}^{M_{\rm max}+1}}{1-p_{\rm c}}.
\end{equation}

Next, we calculate $N_{\rm clicks}^{\rm av}$. For this, we need to consider only those blocks of signals that Eve sends to Bob containing at least one non-vacuum signal ({\it i.e.}, a signal $\ket{\varphi_i}$ with $i=0,1,2$). This is so because, as already mentioned before, in the untrusted device scenario the vacuum signals $\ket{\varphi_{\rm vac}}$ cannot produce a ``click'' at Bob's side. This means, in particular, that $N_{\rm clicks}^{\rm av}$ can be written as
\begin{equation}\label{NEc}
N_{\rm clicks}^{\rm av}=\sum_{k=2}^{M_{\rm max}} p_{\rm s}(k) p_{\rm click}(k),
\end{equation}
where $p_{\rm s}(k)$ is given by Eq.~(\ref{psk}), and $p_{\rm click}(k)$ denotes the average number of ``clicks'' observed by Bob in his data line when Eve sends him a block containing $k+1$ signals, with $k=2,\ldots,M_{\rm max}$. To calculate this latter quantity, we need to take into account the post-processing step that Eve applies to the blocks of signals before she sends them to Bob. This is what we do in the last part of this section. 

But before we calculate $p_{\rm click}(k)$, we note that $G_{\rm zero}$ corresponds to the maximum value of $G$, which happens when we select the maximum probability $p_{\rm c}$ given by Eq.~(\ref{qs_qf_qinc}). This corresponds to using the parameters $q^{\rm s}_{\rm s}$ and $q^{\rm d}_{\rm s}$ given by the Claim in Sec.~\ref{meas}. From Eqs.~(\ref{gain1})-(\ref{psk})-(\ref{NE_final})-(\ref{NEc}), we obtain, therefore, that 
\begin{eqnarray}
G_{\rm zero}&=&\frac{1-p_{\rm c}}{1-p_{\rm c}^{M_{\rm max}+1}}\Bigg[\sum_{k=2}^{M_{\rm max}-1} p_{\rm c}^k(1-p_{\rm c}) p_{\rm click}(k)\nonumber \\
&+&p_{\rm c}^{M_{\rm max}}p_{\rm click}(M_{\rm max})\Bigg].
\end{eqnarray}
As already mentioned earlier, $G_{\rm zero}$ increases when we increase $M_{\rm max}$ and converges very quickly to $G_{\rm zero}^{\rm max}$. Indeed, above a certain relatively small value of $M_{\rm max}$, the improvement is already essentially negligible.

\subsection{Probabilities $p_{\rm click}(k)$}

To calculate $p_{\rm click}(k)$, we shall consider three cases, depending on the measurement result associated to the signal located in the first position of the block. This is illustrated in Fig.~\ref{fig:firstSignal}.
\begin{figure}
\centering 
\centerline{\includegraphics*[scale=0.5]{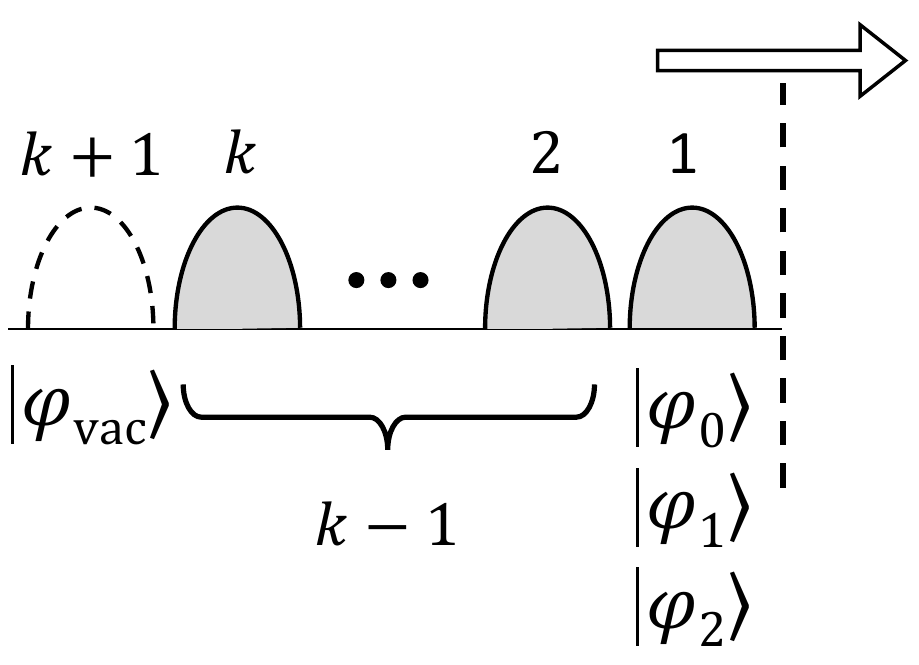}}
\caption{Schematic representation of a block of signals in which Eve obtained $k$ consecutive conclusive measurement results (indicated in the figure with grey oval signals), with $k=2,\ldots,M_{\rm max}$, and the signal at the position $k+1$ is replaced with a vacuum signal $\ket{\varphi_{\rm vac}}$. To calculate $p_{\rm click}(k)$, we consider three cases, depending on the result ($\ket{\varphi_0}$, $\ket{\varphi_1}$ or $\ket{\varphi_2}$) obtained by Eve in the first position of the block. The arrow indicates the direction of transmission towards Bob, and the dashed vertical line represents the point where Eve started counting the number of consecutive conclusive measurement results. }
\label{fig:firstSignal}
\end{figure}

In the calculations below, we will use the conditional probabilities, $p(j|{\rm c})$ with $j=0,1,2$, that Eve obtains a result $\ket{\varphi_j}$ given that her USD measurement is conclusive. From Table~\ref{table1}, together with the a priori probabilities $p_{\ket{\varphi_j}}$ that Alice generates the signals $\ket{\varphi_i}$, it is straightforward to show that the probabilities $p(j|{\rm c})$ satisfy
\begin{eqnarray}\label{cond_prob}
p(0|{\rm c})&=&p(1|{\rm c})=\frac{(1-f)q^{\rm s}_{\rm s}}{2p_{\rm c}}, \ {\rm and}\ p(2|{\rm c})=\frac{f q^{\rm d}_{\rm s}}{p_{\rm c}}. \nonumber \\
\end{eqnarray}
Obviously, they fulfill $\sum_{j=0}^2 p(j|{\rm c})=1$.

Depending on the measurement result that Eve obtains for the first signal in a block, we can write $p_{\rm click}(k)$ as follows
\begin{eqnarray}\label{pclickFirstSignal}
p_{\rm click}(k)&=&\sum_{j=0}^2p(j|{\rm c})p_{\rm click}(k|j)\nonumber \\
&=&p(1|{\rm c})\sum_{j=0}^1p_{\rm click}(k|j)+\big[1-2p(1|{\rm c})\big]p_{\rm click}(k|2), \nonumber \\
\end{eqnarray}
where the conditional quantities $p_{\rm click}(k|j)$, with $j=0,1,2$, denote the average number of ``clicks'' observed by Bob when Eve sends him a block with $k+1$ signals in which she observed the signal $\ket{\varphi_j}$ in the first position of the block. Also, in the second equality within Eq.~(\ref{pclickFirstSignal}) we use the properties of the conditional probabilities $p(j|{\rm c})$.

As already mentioned earlier, to calculate $p_{\rm click}(k|j)$, which we do next, we shall consider, for simplicity, that the intensity $|\beta|^2$ of the coherent pulses resent by Eve is sufficiently large such that Bob obtains a detection ``click'' with basically unit probability. This is implicitly assumed in the calculations that follow.

\subsubsection{Average number of ``clicks'' $p_{\rm click}(k|1)$}

In this case, since the first optical pulse of a signal $\ket{\varphi_1}$ is a vacuum pulse, the longest sub-block situated between vacuum pulses (if there is any) must include this signal $\ket{\varphi_1}$. This is depicted in Fig.~\ref{fig:firstSignal1}. This figure includes as well the number of ``clicks'' that Bob obtains in his data line for each of the three sub-cases considered in that figure, which depend on the signal found in the $k$-th position of the block. 
\begin{figure}
\centering 
\centerline{\includegraphics*[scale=0.5]{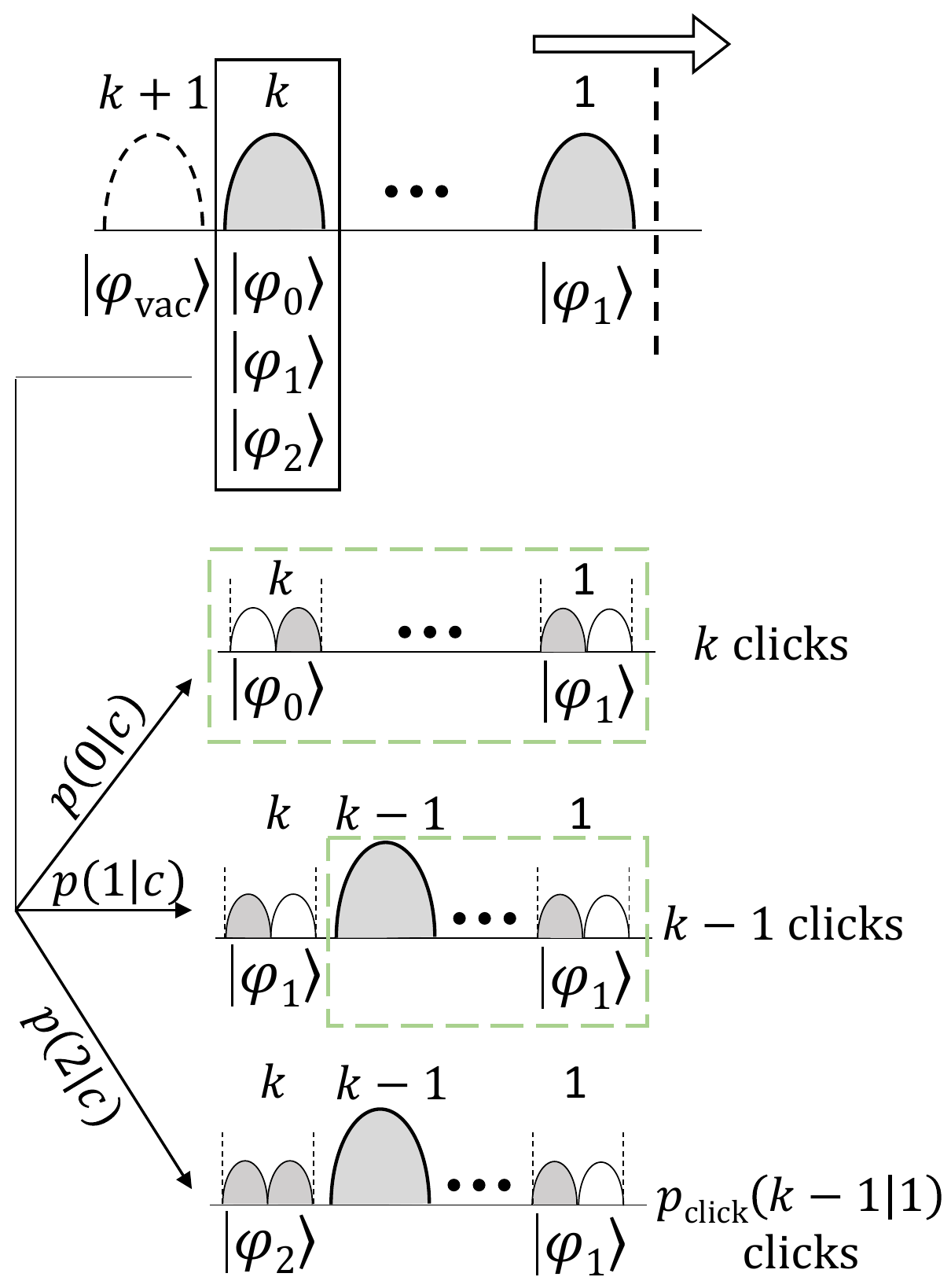}}
\caption{Illustration of the three sub-cases that we consider to evaluate $p_{\rm click}(k|1)$. With probability $p(0|{\rm c})$ the signal in the $k$-th position of the block is $\ket{\varphi_0}$. In this scenario, Eve resends Bob all the $k$ conclusive signals, as the block starts and ends with vacuum pulses. The number of ``clicks'' at Bob's data line is then $k$, because we assume that the intensity of Eve's signals is large enough such that they produce a detection ``click'' with basically unit probability and, moreover, double ``clicks'' within a signal are randomly assigned to single ``clicks'' by Bob. The other two sub-cases in which the signal in the $k$-th position of the block is $\ket{\varphi_1}$ or $\ket{\varphi_2}$ are  described in detail in the text. In the figure, large ovals represent signals, while small ovals represent optical pulses within a signal. Moreover, the dashed rectangles indicate the signals $\ket{\varphi_j}$, with $j=0,1,2$, that Eve resends to Bob.}
\label{fig:firstSignal1}
\end{figure}

Precisely, if the $k$-th signal is $\ket{\varphi_0}$, which happens with probability $p(0|{\rm c})$, then Eve resends Bob all the $k$ conclusive signals in the block because it starts and ends with vacuum pulses. This means that Bob will obtain $k$ detection ``clicks'', as double ``clicks'' are randomly assigned by him to single ``clicks''. On the other hand, if the $k$-th signal is $\ket{\varphi_1}$, which happens with probability $p(1|{\rm c})$, then Eve resends Bob the first $k-1$ conclusive signals in the block because such sub-block has vacuum pulses on its edges. At the same time, she replaces the $k$-th signal with $\ket{\varphi_{\rm vac}}$. This means that Bob will obtain $k-1$ detection ``clicks''. Finally, if the $k$-th signal is $\ket{\varphi_2}$, which happens with probability $p(2|{\rm c})$, she replaces that signal with $\ket{\varphi_{\rm vac}}$ because, obviously, the longest sub-block situated between vacuum pulses cannot contain such signal. Moreover, in this last scenario Bob will obtain $p_{\rm click}(k-1|1)$ detection ``clicks', as Eve now has a block with $\ket{\varphi_1}$ in its first position, followed by $(k-2)$-th signals $\ket{\varphi_j}$, with $j=0,1,2$. Note that when $k=2$ we have that $p_{\rm click}(k-1|1)=p_{\rm click}(1|1)=0$ because it is not possible to find a sub-block with at least one signal situated between vacuum pulses.

Putting all together, we obtain the following recursive relation for the expected number of ``clicks'' $p_{\rm click}(k|1)$ at Bob's side
\begin{eqnarray}\label{recursive_pc1}
p_{\rm click}(k|1)&=&p(0|{\rm c})k+p(1|{\rm c})\big[k-1\big]\nonumber \\
&+&p(2|{\rm c})p_{\rm click}(k-1|1)\nonumber \\
&=&p(1|{\rm c})\big[2k-1\big]\nonumber \\
&+&\big[1-2p(1|{\rm c})\big]p_{\rm click}(k-1|1),
\end{eqnarray} 
where in the second equality we use the properties of the conditional probabilities $p(j|{\rm c})$.

After some algebra, and taking into account that $p_{\rm click}(1|1)=0$, from Eq.~(\ref{recursive_pc1}) we obtain
\begin{align}\label{final_pc1}
p_{\rm click}(k|1)&=\frac{1}{2p(1|{\rm c})\big[2p(1|{\rm c})-1\big]}\nonumber\\&\times\bigg\{\big[1-2p(1|{\rm c})\big]^k\big[3p(1|{\rm c})-1\big]\nonumber\\
&+\big[2p(1|{\rm c})-1\big]\big[\big(2k+1\big )p(1|{\rm c})-1\big]\bigg\},
\end{align}
where $p(1|{\rm c})$ is given by Eq.~\eqref{cond_prob}.

\subsubsection{Average number of ``clicks'' $p_{\rm click}(k|0)$}

This case is very similar to the previous one, and we omit the details here for simplicity. The different sub-cases are illustrated in Fig.~\ref{fig:firstSignal1}. 
\begin{figure}
\centering 
\centerline{\includegraphics*[scale=0.5]{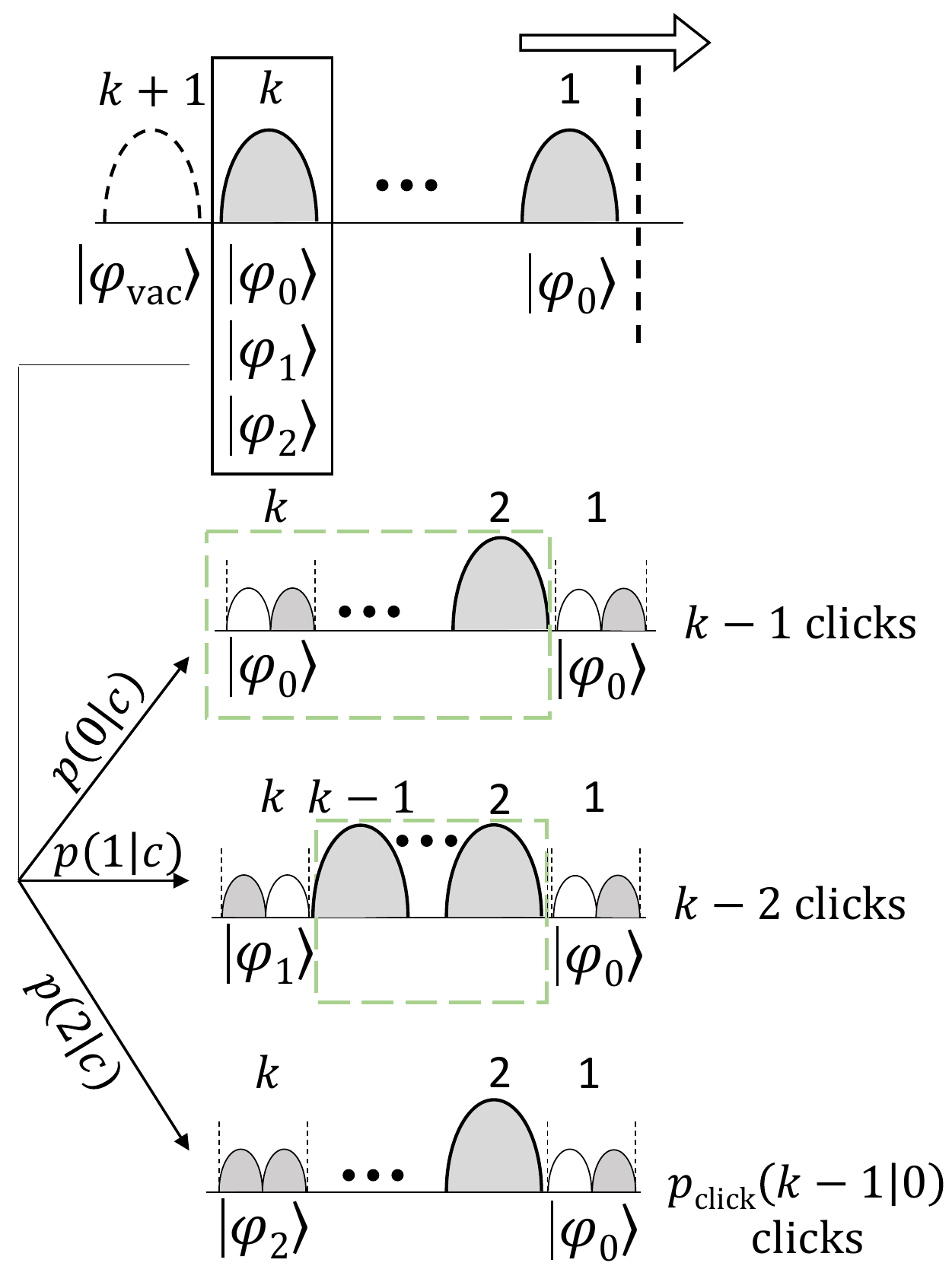}}
\caption{Illustration of the three sub-cases that we consider to evaluate $p_{\rm click}(k|0)$. With probability $p(0|{\rm c})$ the signal in the $k$-th position of the block is $\ket{\varphi_0}$. In this scenario, Eve resends Bob all the conclusive signals from position $2$ to position $k$, while the first signal is replaced with $\ket{\varphi_{\rm vac}}$. The number of ``clicks'' at Bob's side is then $k-1$. The other two sub-cases in which the signal in the $k$-th position of the block is $\ket{\varphi_1}$ or $\ket{\varphi_2}$ are described in detail in the text. For the meaning of the different elements see the caption of Fig.~\ref{fig:firstSignal1}.
}
\label{fig:firstSignal0}
\end{figure}
We find the following recursive relation for the expected number of ``clicks'' $p_{\rm click}(k|0)$ at Bob's side
\begin{eqnarray}\label{recursive_pc0}
p_{\rm click}(k|0)&=&p(0|{\rm c})\big[k-1\big]+p(1|{\rm c})\big[k-2\big]\nonumber \\
&+&p(2|{\rm c})p_{\rm click}(k-1|0)\nonumber \\
&=&p(1|{\rm c})\big[2k-3\big]\nonumber \\
&+&\big[1-2p(1|{\rm c})\big]p_{\rm click}(k-1|0).
\end{eqnarray} 
After some algebra, and taking into account that here $p_{\rm click}(1|0)=0$, from Eq.~(\ref{recursive_pc0}) we obtain
\begin{align}\label{final_pc0}
p_{\rm click}(k|0)&=\frac{1}{2p(1|{\rm c})\big[2p(1|{\rm c})-1\big]}\nonumber\\
&\times\bigg\{\big[1-2p(1|{\rm c})\big]^k\big[p(1|{\rm c})-1\big]\nonumber\\
&+\big[2p(1|{\rm c})-1\big]\big[\big(2k-1\big )p(1|{\rm c})-1\big]\bigg\},
\end{align}
where $p(1|{\rm c})$ is again given by Eq.~\eqref{cond_prob}.

\subsubsection{Average number of ``clicks'' $p_{\rm click}(k|2)$}

When the signal located in the first position of a block is $\ket{\varphi_2}$, Eve always replaces this signal with a vacuum signal $\ket{\varphi_{\rm vac}}$, as $\ket{\varphi_2}$ does not contain a vacuum pulse. Thus, the remaining sub-block has now $k-1$ conclusive results, each of which can be a signal $\ket{\varphi_j}$ with $j=0,1,2$. This means, therefore, that the average number of ``clicks'' $p_{\rm click}(k|2)$ coincides with that of a general block with $k-1$ consecutive conclusive measurement results. That is, we find that
\begin{equation}\label{final_pc2}
p_{\rm click}(k|2)=p_{\rm click}(k-1).
\end{equation}

We have now all the quantities required to evaluate $p_{\rm click}(k)$. Precisely, by combining Eqs.~\eqref{final_pc1}-\eqref{final_pc0}-\eqref{final_pc2} with Eq.~\eqref{pclickFirstSignal}, we obtain the following recursive relation for $p_{\rm click}(k)$,
\begin{align}\label{final_pClick}
p_{\rm click}(k)&=2kp(1|{\rm c})+\big[1-2p(1|{\rm c})\big]^k-1\nonumber\\&+\big[1-2p(1|{\rm c})\big]p_{\rm click}(k-1).
\end{align}
To solve this equation for any $k>2$, we need to calculate the starting point of the recursion, that is, $p_{\rm click}(2)$. 

The quantity $p_{\rm click}(2)$ consists of nine different cases, since each of the two conclusive results can correspond to a state $\ket{\varphi_j}$ with $j=0,1,2$. This is depicted in Fig.~\ref{fig:pclick2}.
\begin{figure}
\centering 
\centerline{\includegraphics*[scale=0.48]{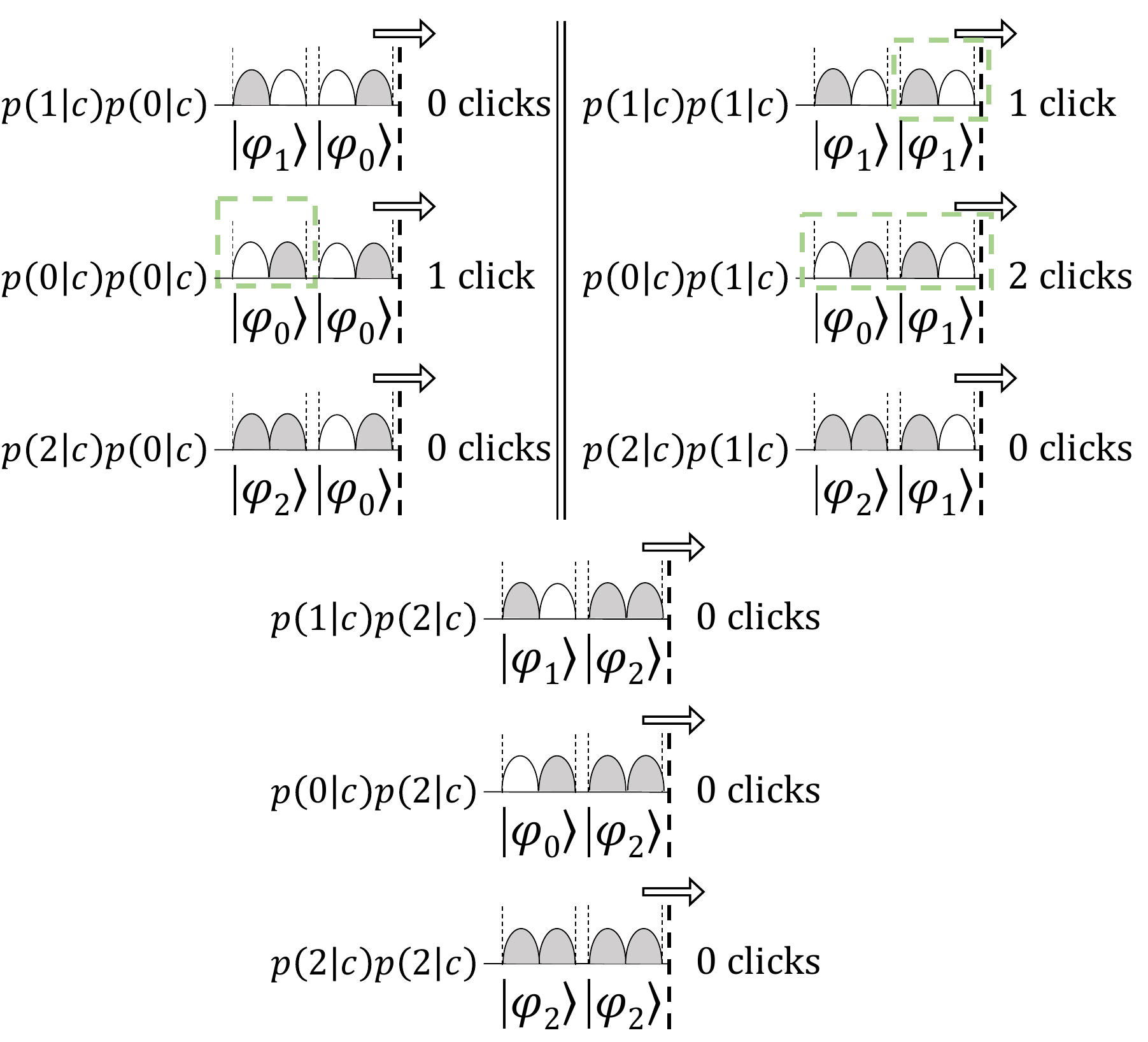}}
\caption{Illustration of the nine different cases corresponding to a block with $k=2$ consecutive conclusive measurement results, together with the number of ``clicks'' that Bob will obtain in his data line. For example, in the first case, with probability $p(1|{\rm c})p(0|{\rm c})$ the signals in the block are $\ket{\varphi_1}\ket{\varphi_0}$. This means that Eve cannot extract a sub-block surrounded by vacuum pulses. Therefore, she replaces these signals with two vacuum signals $\ket{\varphi_{\rm vac}}$, and the number of ``clicks'' at Bob's data line is then zero. The other cases are analogous. In the figure, the ovals represent the two optical pulses within a signal. }
\label{fig:pclick2}
\end{figure}
Whenever one out of the two signals in the block is $\ket{\varphi_2}$, Eve replaces all signals with vacuum signals $\ket{\varphi_{\rm vac}}$, as she cannot find a sub-block that has vacuum pulses on its borders. In this case, Bob will not observe any detection ``click''. For the same reason, if the first result of the block is $\ket{\varphi_0}$ and the second one is $\ket{\varphi_1}$, Eve also replaces these two signals with vacuum signals $\ket{\varphi_{\rm vac}}$. In all the other cases, Eve resends Bob one non-vacuum signal (and, thus, he obtains one detection ``click'') except when the first signal is $\ket{\varphi_1}$ and the second one is $\ket{\varphi_0}$, in which case she resends him these two signals (and, thus, Bob obtains two detection ``clicks''). We find, therefore, that 
\begin{equation}\label{pclick2}
p_{\rm click}(2)=4 p(1|{\rm c})^2,
\end{equation}
where we have used the fact that $p(0|{\rm c})=p(1|{\rm c})$. 

By combining Eq.~\eqref{final_pClick} with Eq.~\eqref{pclick2}, we finally obtain that $p_{\rm click}(k)$ satisfies
\begin{align}\label{FINAL_pClick}
p_{\rm click}(k)&=\frac{1}{p(1|{\rm c})}\Big\{-1+(k+1) p(1|{\rm c})\\\nonumber&+\big[1-2p(1|{\rm c})\big]^k\big[1+(k-1)p(1|{\rm c})\big]\Big\}.
\end{align}
for any $k\geq 2$.

\section{Performance evaluation}\label{Sec:Evaluation}

In this section, we evaluate the limitations that the zero-error attack presented in Sec.~\ref{Sec:Attack} imposes on both the maximum distance and secret key rate which might be achievable with COW-QKD. Moreover, we compare our results with other previously introduced zero-error attacks against this scheme. 

\subsection{Upper bound on the transmission distance}\label{Sec:Compare_with_previous}

Here, we first compare the value of the gain $G_{\rm zero}$ of the zero-error attack introduced in Sec.~\ref{Sec:Attack} with that associated to the zero-error attack in~\cite{cow_attack}, which has been shown to provide much tighter upper security bounds for COW-QKD than previous analyses~\cite{upp_cow2,upp_cow1,filteringattack}. 

For this, we consider, for instance, the most recent implementations of COW-QKD reported in~\cite{cow4}. The experimental parameters are provided in Table~\ref{tab:korzhParams}. They correspond to those experiments in~\cite{cow4} which use the highest and the lowest intensity value $\mu=|\alpha|^2$ for Alice's signals. 
\begin{table}
\centering
\begin{tabular}{ |c|c|c|c|c| } 
 \hline
 $\mu$ & Attenuation[dB] & Distance (km) &  $p_{\rm d}$ & $\eta_{\rm D}$   \\ 
 \hline
 0.06  & 16.9 & 104  & $4.38\times 10^{-7}$ & 0.22  \\ 
 \hline
 0.1   & 34.1 &  203 &  $1.3\times 10^{-8}$  & 0.27  \\ 
 \hline
\end{tabular}
\caption{Experimental parameters associated to those experiments reported in~\cite{cow4} that use the highest and the lowest intensity value $\mu=|\alpha|^2$ for Alice's signals. The attenuation included in the table only considers the channel loss; the distance corresponds to the fibre length; and $p_{\rm d}$ and $\eta_{\rm D}$ are, respectively, the dark count rate and the detection efficiency of Bob's detector in the data line. Moreover, in both experiments $f=0.155$, and the transmittance of Bob's beamsplitter is $t_{\rm B}=0.9$.}\label{tab:korzhParams}
\end{table}

The result of the comparison is shown in Table~\ref{tab:comparison}. This table demonstrates that the gain $G_{\rm zero}$ associated to the zero-error attack in this work can be more than an order of magnitude higher than that in~\cite{cow_attack}. 
\begin{table}
\centering
\begin{tabular}{ |l|c|c| } 
 \hline
       & $\log_{10}(G_{\rm zero})$ & $L_{\rm zero}$(km)  \\ 
 \hline
 Zero-error attack in~\cite{cow_attack} $\mu=0.06$ & -3.8 &   120  \\ 
 \hline
 This work $\mu=0.06$                  & -2.62  &  47   \\ 
 \hhline{|=|=|=|} 
Zero-error attack in~\cite{cow_attack} $\mu=0.1$ & -3.3 &    105  \\ 
 \hline
This work $\mu=0.1$                  & -2.19  &   38  \\ 
 \hline
\end{tabular}
\caption{Comparison between the zero-error attack presented in this work and that introduced in~\cite{cow_attack}. $L_{\rm zero}$ is the distance corresponding to $G_{\rm zero}$ given the experimental parameters in~Table~\ref{tab:korzhParams}. That is, no positive secret key rate is posible beyond $L_{\rm zero}$.}\label{tab:comparison}
\end{table}
This implies a significant reduction of the maximum achievable distance, which we shall call $L_{\rm zero}$, that is possible with COW-QKD in the absence of errors. To obtain $L_{\rm zero}$ from $G_{\rm zero}$ we use a typical channel model that is described in Appendix~\ref{app:sec_experiments_comparison}, and matches the specific experimental parameters given in the tables. For example, as illustrated in Table~\ref{tab:comparison}, if we consider the experimental implementation over 104 km (203 km) which was claimed to be secure in~\cite{cow4}, it turns out that Eve could perform a zero-error attack already at a distance $L_{\rm zero}=47$ km ($L_{\rm zero}=38$ km) according to this work. We note that the zero-error limit achieved with the attack in~\cite{cow_attack} was $L_{\rm zero}=120$ km ($L_{\rm zero}=105$ km). This is a remarkable improvement. 

The results above consider the use of ultra-low-loss optical fibres~\cite{cow_attack}, which might be challenging to employ in practical applications. To conclude this section, we now consider the maximum achievable distance that would be possible with COW-QKD by utilizing state-of-the-art devices but assuming the use of standard optical fibres with an attenuation coefficient of $0.2$ dB/km in the third telecom window. Also, for concreteness, we consider that the intensity of Alice signals is around $0.5$, which is similar to the value used for key generation in decoy-state QKD~\cite{decoy1,decoy2,decoy3}. The list of experimental parameters is provided in Table~\ref{tab:Example}.
\begin{table}
\centering
\begin{tabular}{ |c|c|c|c|c|c| } 
 \hline
 $\mu$ & $f$   & $\eta_{\rm D}$ &  $p_{\rm d}$  & $t_{\rm B}$ & $\alpha_{\rm att}$ (dB/km) \\ 
 \hline
 0.5   & 0.1~\cite{upp_cow2}   &    0.77~\cite{minder}        &      $2\times10^{-8}$~\cite{minder}         &     0.9~\cite{cow4}     & 0.2  \\  
 \hline
\end{tabular}
\caption{List of experimental parameters. Their meaning coincides with that given in Table~\ref{tab:korzhParams}. $\alpha_{\rm att}$ denotes the attenuation coefficient of the channel.}\label{tab:Example}
\end{table}
In this case, it turns out that $L_{\rm zero}$ is only about $22.60$~km. 

\subsection{Upper bound on the secret key rate}\label{Sec:UpperBound}

In this section, we evaluate the simple upper bound on the secret key rate $K$ of COW-QKD obtained in~\cite{cow_attack} by using the zero-error attack introduced above. The upper bound reads 
\begin{equation}\label{upp_bound_trivial}
K< (1-f)\eta \mu_{\rm max}(f)\equiv{}R_{\rm upp},
\end{equation}
where $\eta=\eta_{\rm ch}\eta_{\rm D}$ is the overall system's transmittance, with $\eta_{\rm ch}$ being the transmittance of the channel, and $\mu_{\rm max}(f)$ is the maximum allowed intensity for Alice's signals such that Eve's zero-error attack against all signals sent by Alice is not possible. That is, for each value $G$ of the gain at Bob's data line, $\mu_{\rm max}(f)$ is the maximum intensity of Alice's signals that guarantees $G_{\rm zero}<G$. Note that by increasing the signals' intensity, the success probability $p_{\rm c}$ of Eve's USD measurement increases as well, and, thus, also $G_{\rm zero}$ increases. We denote the maximum intensity by $\mu_{\rm max}(f)$ because it typically depends on $f$. Actually, Eq.~(\ref{upp_bound_trivial}) is just a simple upper bound on the probability that Alice sends Bob a signal that encodes a bit value ({\it i.e.}, $\ket{\varphi_0}$ or $\ket{\varphi_1}$) and Bob observes a ``click'' in his data line, which, obviously, is also an upper bound on the secret key rate. 

For the numerical simulations, we use the channel model described in Appendix~\ref{app:sec_experiments_comparison}. Moreover, we consider the best possible scenario for Alice and Bob for key generation, that is, we set the dark count probability of Bob's detectors to zero and assume that $t_{\rm B}\approx 1$, which means that almost all the incoming signals go into Bob's data line. If the intensity of Alice's signals is $\mu_{\rm max}(f)$, then the expected gain $G$ at Bob's data line has the form
\begin{equation}\label{gainfor_skr}
G=1-\left[(1-f)e^{-\eta\mu_{\rm max}(f)}+f e^{-2\eta\mu_{\rm max}(f)} \right].
\end{equation}
Then, for given $\eta$ and $f$, we determine numerically the maximum value of $\mu_{\rm max}(f)$ such that $G_{\rm zero}<G$. The result is shown in Fig.~\ref{fig:mumax}, which illustrates $\mu_{\rm max}(f)$ as a function of $\eta$ when $f=0.155$~\cite{cow4}. 
\begin{figure}
\centering 
\centerline{\includegraphics*[scale=0.42]{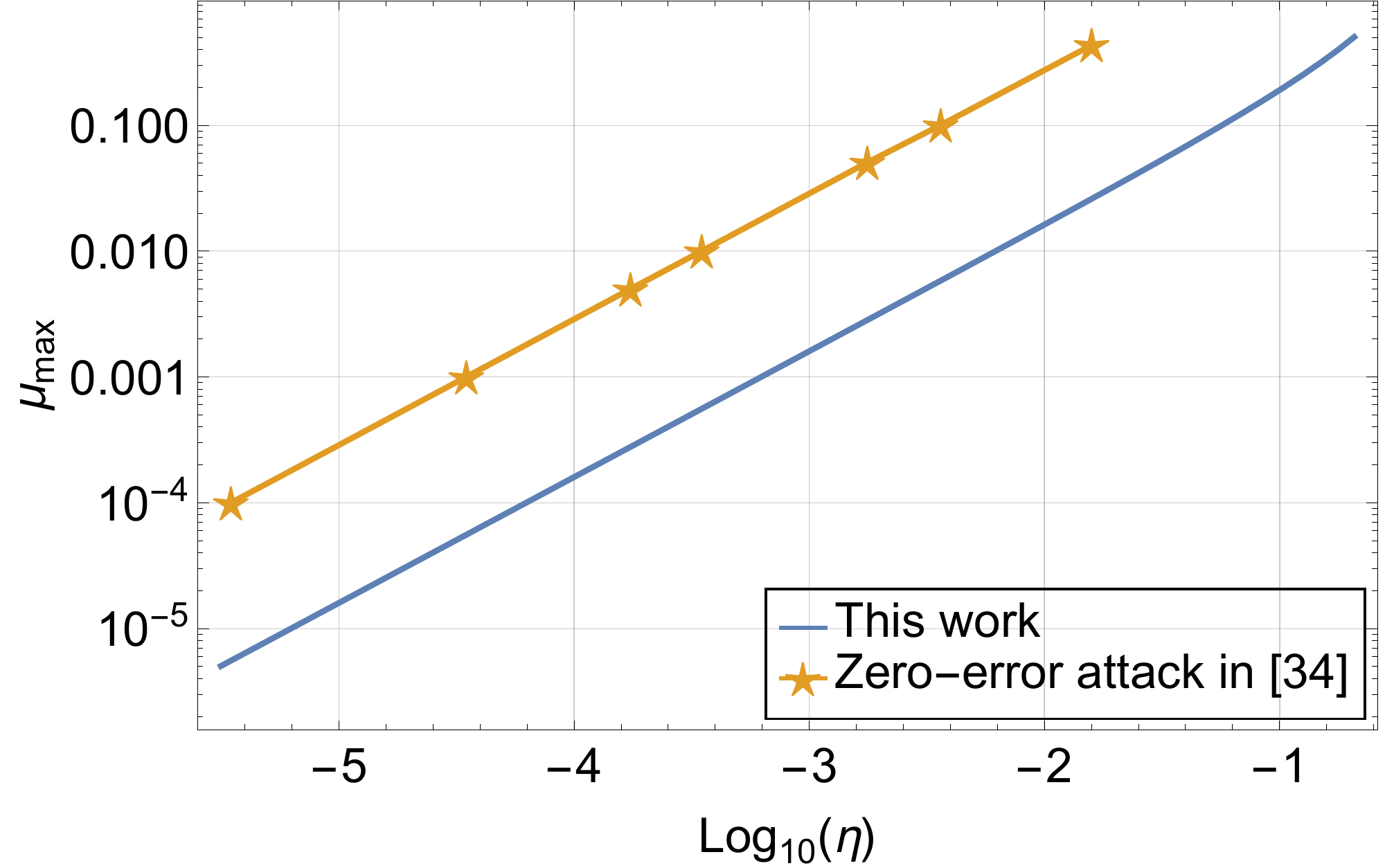}}
\caption{Maximum possible intensity of Alice's signals as a function of the overall system's transmittance $\eta$ when $f=0.155$~\cite{cow4}. For comparison, this figure includes as well the results in~\cite{cow_attack}; the stars correspond to the actual data points evaluated in that work and the line is an interpolation.}
\label{fig:mumax}
\end{figure}
For comparison, this figure also includes the results obtained in~\cite{cow_attack}. We can see that $\mu_{\rm max}(f)$ is very limited and decreases quite fast when $\eta$ decreases. Moreover, the maximum $\mu_{\rm max}(f)$ imposed by the zero-error attack presented in Sec.~\ref{Sec:Attack} is more than an order of magnitude lower than that in~\cite{cow_attack}. 

Given $f$ and $\mu_{\rm max}(f)$, we can evaluate the upper bound $R_{\rm upp}$ given by Eq.~(\ref{upp_bound_trivial}) as a function of $\eta$. This is illustrated in Fig.~\ref{fig:upperbound}.
\begin{figure}
\centering 
\centerline{\includegraphics*[scale=0.42]{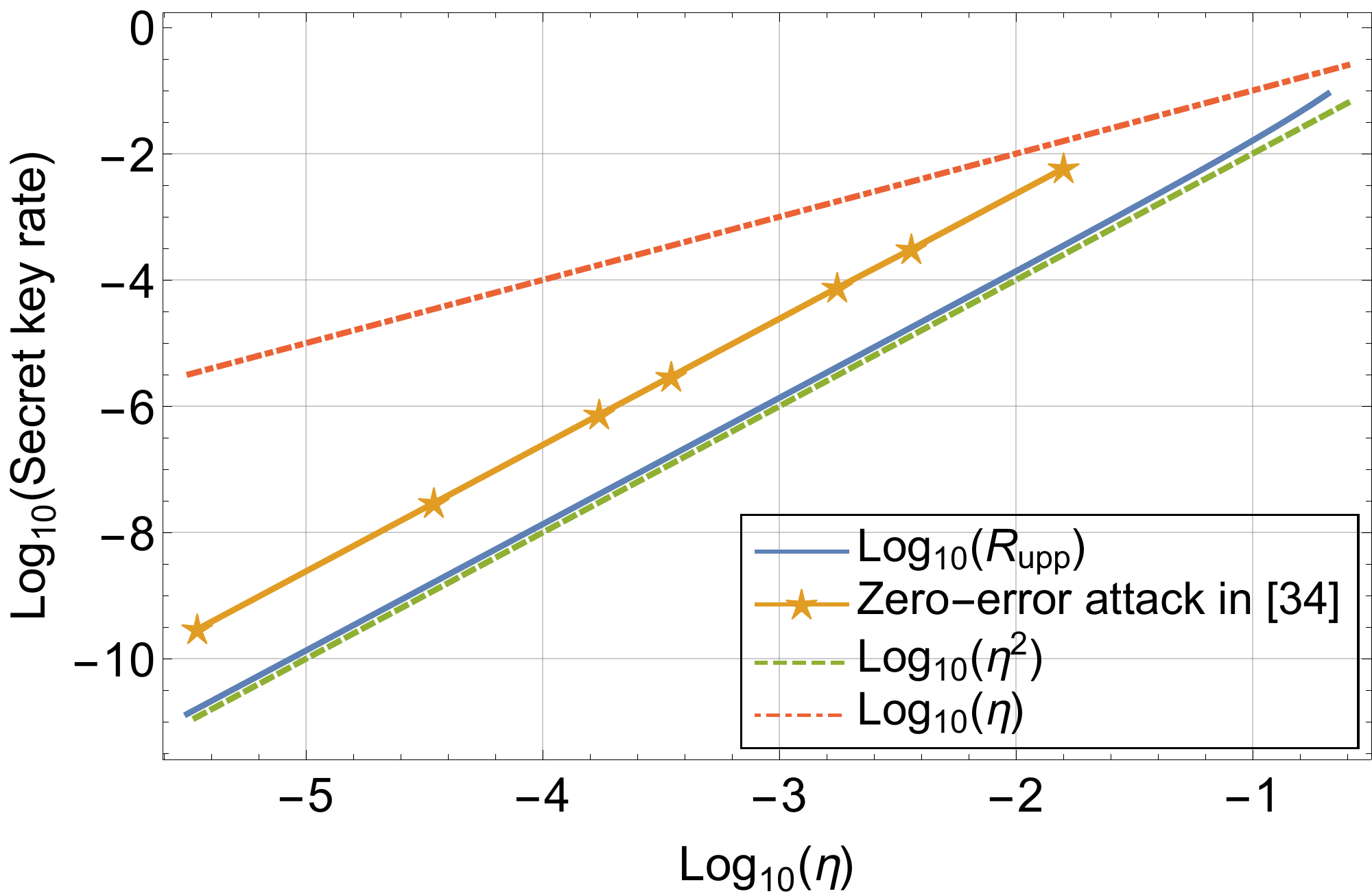}}
\caption{Upper bound $R_{\rm upp}$ on the secret key rate of COW-QKD as a function of $\eta$ when $f=0.155$~\cite{cow4}. For comparison, this figure includes as well the results in~\cite{cow_attack} (the stars correspond to the actual data points evaluated in that work and the line is an interpolation), together with the curves for linear and quadratic scaling in $\eta$.}
\label{fig:upperbound}
\end{figure}
The same improvement observed in $\mu_{\rm max}(f)$ when comparing the zero-error attack above and that in~\cite{cow_attack} is obviously also present in $R_{\rm upp}$. Indeed, this quantity now almost overlaps the curve $\log_{10}(\eta^2)$. Moreover, we note that by setting $f$ smaller than 0.155, $R_{\rm upp}$ moves slightly closer to the $\log_{10}(\eta^2)$ line, so decreasing $f$ would be basically unnoticeable in Fig.~\ref{fig:upperbound}.

\section{Discussion}\label{Sec:Discussion}

COW-QKD monitors eavesdropping through the error rates observed in the data line and in the monitoring line of Bob's receiver. However, according to zero-errors attacks, this is not sufficient to achieve a good performance. To improve the robustness of COW-QKD against this type of attacks and, thus, increase its achievable key rate and distance, Alice and Bob need to monitor more observables and include this additional information in the security proof. 

One possibility would be to modify Bob's receiver such that it can also measure the coherence between non-adjacent pulses, like it has been proposed in~\cite{dps4} for the case of DPS-QKD. Indeed, by doing so Alice and Bob could now avoid that Eve replaces Alice's original signals with vacuum signals $\ket{\varphi_{\rm vac}}$ without introducing errors, which is a key feature exploited by zero-error attacks. The main drawback of this approach is, however, that it requires a much more cumbersome receiver,  thus the principal advantage of COW-QKD regarding its simple experimental setup would probably vanish.  

A second option would be to measure the detection rates of Alice's signals at Bob's side. This has been also suggested in~\cite{cow_attack,upp_cow2}. For example, in the zero-error attack introduced in this work, Eve does not resend decoy signals to Bob. This is so because for the typical experimental parameter regime considered in COW-QKD ({\it i.e.}, when $f\leq 0.15$ and $|\alpha|^2\leq 0.5$), Eve's optimal USD measurement corresponds to the first case presented in the Claim in Sec.~\ref{meas}, where $q_{\rm s}^{\rm d}=0$. In this sense, the attack in Sec.~\ref{Sec:Attack} represents an extreme case where the detection statistics of the decoy signals are not preserved at all. 

Of course, if necessary, one could slightly modify the attack and impose that $q_{\rm s}^{\rm d}>0$ to guarantee that Eve resends some decoy signals to Bob. Given that $q_{\rm s}^{\rm d}$ is sufficiently small, it can be shown that the results obtained in such scenario would basically match those presented in this paper. The situation changes, however, if one requires that the detection rates associated to the different signals sent by Alice are similar to the expected values in the absence of Eve (see also~\cite{upp_cow2,filteringattack}). Indeed, this would have a big impact on the zero-error attack introduced above. First, the probability that Eve obtains a conclusive measurement result with her USD measurement would now decrease with respect to the optimal solution provided by the Claim. Also, Eve would have to change the post-processing of her measurement results to decide which signals she actually resends to Bob. Note that the current post-processing favours the transmission of bit signals with respect to decoy signals. This is so because bit signals contain a vacuum optical pulse, and thus it is easier for Eve to find sub-blocks of signals surrounded by vacuum pulses if they include bit signals. As a result, one expects that the gain $G_{\rm zero}$ at which a zero-error attack is actually possible would decrease significantly. 

A main difficulty of this second approach is, however, how to incorporate the detection rate information in a security proof for COW-QKD. For example, Eve might attack a small fraction of the signals sent by Alice, and thus detecting deviations between the actual detections rates and the expected ones might be challenging in practice. Moreover, we note that this problem could be amplified because of statistical fluctuations. Indeed, since in any practical implementation Alice sends a finite number of signals to Bob, the observed detection rates will naturally deviate from the expected ones even in the absence of Eve, and Eve could try to hide her attack in such deviations. 

\section{Conclusion}\label{Sec:Conclusion}

In this paper, we have proposed a simple, and essentially optimal, zero-error attack against coherent-one-way (COW) quantum key distribution (QKD). In this attack, Eve measures out all the signals sent by Alice one by one by employing an optimal unambiguous state discrimination measurement. Afterwards, she sends Bob all those blocks of signals which do not introduce any error in his data line nor in his monitoring line. Importantly, zero-error attacks are a special type of intercept-and-resend attack and, thus, they do not allow the distribution of a secure key. 

In doing so, we have obtained upper security bounds on the secret key rate of COW-QKD that are more than an order of magnitude lower than previously known upper bounds. Our attack highlights the fact that only monitoring errors in Bob's data and monitoring lines is not sufficient to achieve a good performance with this protocol. 

\section*{Acknowledgments}

The authors would like to thank the Galician Regional Government (consolidation of Research Units: AtlantTIC), the Spanish Ministry of Economy and Competitiveness (MINECO), the Fondo Europeo de Desarrollo Regional (FEDER) through Grant No.~TEC2017-88243-R, and the European Union's Horizon 2020 research and innovation programme under the Marie Sk\l{}odowska-Curie grant agreement No 675662 (project QCALL) for financial support.

\appendix
\section{Proof of the Claim}\label{proof}

To prove the Claim in Sec.~\ref{meas}, we follow the steps presented in~\cite{sugimoto}. As we will show below, due to the symmetry of the signals ({\it i.e.}, the fact that $\innerproduct{\varphi_0}{\varphi_2}=\innerproduct{\varphi_1}{\varphi_2}$) and the symmetry of their a priori probabilities ({\it i.e.}, $p_{\ket{\varphi_0}}=p_{\ket{\varphi_1}}$), it turns out that the solution that maximizes $p_{\rm c}$ actually fullfils $p_{0|0}=p_{1|1}$. 

Precisely, we have that the inner products between Alice's signals satisfy
\begin{align}\label{App_InnerProd}
\innerproduct{\varphi_0}{\varphi_1}&=e^{-|\alpha|^2},\nonumber\\
\innerproduct{\varphi_0}{\varphi_2}&=\innerproduct{\varphi_1}{\varphi_2}=e^{-|\alpha|^2/2}.
\end{align}
These inner products are all positive, thus the following quantities can be defined
\begin{align}\label{App_alfas}
\alpha_1=\sqrt{\frac{\innerproduct{\varphi_0}{\varphi_1}\innerproduct{\varphi_0}{\varphi_2}}{\innerproduct{\varphi_1}{\varphi_2}}}&=e^{-|\alpha|^2/2},\nonumber\\
\alpha_2=\sqrt{\frac{\innerproduct{\varphi_0}{\varphi_1}\innerproduct{\varphi_1}{\varphi_2}}{\innerproduct{\varphi_2}{\varphi_0}}}&=e^{-|\alpha|^2/2},\nonumber\\
\alpha_3=\sqrt{\frac{\innerproduct{\varphi_1}{\varphi_2}\innerproduct{\varphi_2}{\varphi_0}}{\innerproduct{\varphi_0}{\varphi_1}}}&=1.
\end{align}

In the simplest case, we have that the optimized values can be written as~\cite{sugimoto}
\begin{align}\label{sol1}
p_{0|0}&=1-\alpha_1^2=1-e^{-|\alpha|^2},\nonumber\\
p_{1|1}&=1-\alpha_2^2=1-e^{-|\alpha|^2},\nonumber\\
p_{2|2}&=1-\alpha_3^2=0,
\end{align}
where, as already mentioned, we have that $p_{0|0}=p_{1|1}$. However, this result is only valid if the following three conditions are fulfilled~\cite{sugimoto}
\begin{align}
e^{-|\alpha|^2/2}\sqrt{p_{\ket{\varphi_0}}}&\leq e^{-|\alpha|^2/2}\sqrt{p_{\ket{\varphi_1}}}+\sqrt{p_{\ket{\varphi_2}}},\label{cond1_a}\\
e^{-|\alpha|^2/2}\sqrt{p_{\ket{\varphi_1}}}&\leq e^{-|\alpha|^2/2}\sqrt{p_{\ket{\varphi_0}}}+\sqrt{p_{\ket{\varphi_2}}},\label{cond1_b}\\
\sqrt{p_{\ket{\varphi_2}}}&\leq e^{-|\alpha|^2/2}\sqrt{p_{\ket{\varphi_0}}}+e^{-|\alpha|^2/2}\sqrt{p_{\ket{\varphi_1}}}\label{cond1_c},
\end{align}
where we have already plugged in the values of the parameters $\alpha_i$, with $i=1,2,3$, given by Eq.~(\ref{App_alfas}). Since $f\in(0,1)$ and $p_{\ket{\varphi_0}}=p_{\ket{\varphi_1}}$ we find that Eqs.~\eqref{cond1_a}-\eqref{cond1_b} are automatically satisfied. Regarding Eq.~\eqref{cond1_c}, if we insert the values of the probabilities $p_{\ket{\varphi_j}}$ in this equation, we have that it can be rewritten as
\begin{equation}\label{cond1}
\sqrt{\gamma}\leq e^{-|\alpha|^2/2},
\end{equation}
with $\gamma=f/[2(1-f)]$. That is, if Eq.~\eqref{cond1} is fulfilled then the optimal values for $q^{\rm s}_{\rm s}$ and $q^{\rm d}_{\rm s}$ are
\begin{align}\label{sol1_Final}
q^{\rm s}_{\rm s}&=1-e^{-|\alpha|^2},\ {\rm and}\ q^{\rm d}_{\rm s}=0.
\end{align}

On the other hand, if Eq.~\eqref{cond1} is not satisfied, it can be shown that the optimal values are the following~\cite{sugimoto} 
\begin{align}\label{sol2}
p_{0|0}&=1-\frac{\alpha_1}{\sqrt{p_{\ket{\varphi_0}}}}\left(-\alpha_2\sqrt{p_{\ket{\varphi_1}}}+\alpha_3\sqrt{p_{\ket{\varphi_2}}}\right),\nonumber\\
p_{1|1}&=1-\frac{\alpha_2}{\sqrt{p_{\ket{\varphi_1}}}}\left(-\alpha_1\sqrt{p_{\ket{\varphi_0}}}+\alpha_3\sqrt{p_{\ket{\varphi_2}}}\right),\nonumber\\
p_{2|2}&=1-\frac{\alpha_3}{\sqrt{p_{\ket{\varphi_2}}}}\left(\alpha_1\sqrt{p_{\ket{\varphi_0}}}+\alpha_2\sqrt{p_{\ket{\varphi_1}}}\right).
\end{align}
Since $\alpha_1=\alpha_2$ and $p_{\ket{\varphi_0}}=p_{\ket{\varphi_1}}$, we have also here that $p_{0|0}=p_{1|1}$. By inserting in Eq.~\eqref{sol2} the values of the coefficients $\alpha_i$ given by Eq.~\eqref{App_alfas} and those of the probabilities $p_{\ket{\varphi_j}}$, we obtain
\begin{align}\label{sol2_Final}
q^{\rm s}_{\rm s}&=1+e^{-|\alpha|^2}-e^{-|\alpha|^2/2}\sqrt{\frac{2f}{1-f}},\nonumber\\
q^{\rm d}_{\rm s}&=1-\sqrt{\gamma^{-1}}e^{-|\alpha|^2/2}.
\end{align}
We note, however, that for this solution to be valid, we need that $q^{\rm s}_{\rm s}$ and $q^{\rm d}_{\rm s}$ are non-negative. Since Eq.~\eqref{cond1} is not satisfied, {\it i.e}, $\sqrt{\gamma}> e^{-|\alpha|^2/2}$, it follows that $q^{\rm d}_{\rm s}>0$. Likewise, it is easy to show that $q^{\rm s}_{\rm s}\geq 0$ is equivalent to the following condition
\begin{equation}\label{cond2a}
\cosh\left(\frac{|\alpha|^2}{2}\right)\geq\sqrt{\gamma}.
\end{equation} 
That is, we find that Eq.~\eqref{sol2_Final} provides the optimal values for $q^{\rm s}_{\rm s}$ and $q^{\rm d}_{\rm s}$ if Eq.~\eqref{cond1} is not satisfied and Eq.~\eqref{cond2a} holds. 

Finally, let us consider the case where Eqs.~\eqref{cond1} and \eqref{cond2a} do not hold. This scenario can be solved by using a two step procedure~\cite{sugimoto}. First, one sets, for example, $p_{0|0}=0$ and reduces the problem to a two-state USD problem. And, second, one infers the solution to the original three-state USD problem based on the solution to such two-state USD problem.

Precisely, we have that the two states and their a priori probabilities in the reduced problem are given by~\cite{sugimoto}
\begin{align}\label{reduced_2states}
\ket{\varphi_1'}&=\frac{\ket{\varphi_0}-\ket{\varphi_1}e^{-|\alpha|^2}}{\sqrt{1-e^{-2|\alpha|^2}}},\nonumber\\ 
\ket{\varphi_2'}&=\frac{\ket{\varphi_2}-\ket{\varphi_1}e^{-|\alpha|^2/2}}{\sqrt{1-e^{-|\alpha|^2}}},
\end{align}
and
\begin{align}\label{reduced_apriori}
p_{\ket{\varphi_1'}}&=\frac{\frac{1-f}{2}\left(1-e^{-2|\alpha|^2}\right)}{\frac{1-f}{2}\left(1-e^{-2|\alpha|^2}\right)+f(1-e^{-|\alpha|^2})},\nonumber\\ 
p_{\ket{\varphi_2'}}&=\frac{f(1-e^{-|\alpha|^2})}{\frac{1-f}{2}\left(1-e^{-2|\alpha|^2}\right)+f(1-e^{-|\alpha|^2})},
\end{align}
respectively. 

In this reduced problem, we define $p_{1|1}'$ and $p_{2|2}'$ as the probabilities that Eve correctly identifies the states $\ket{\varphi_1'}$ and $\ket{\varphi_2'}$, respectively. This means that our goal is to maximize the probability $p_{1|1}'p_{\ket{\varphi_1'}}+p_{2|2}'p_{\ket{\varphi_2'}}$ to obtain a conclusive result. Since Eq.~\eqref{cond2a} is not satisfied, it can be shown that the following condition holds
\begin{equation}\label{solutionReduced2Cond}
\sqrt{\frac{p_{\ket{\varphi_2'}}}{p_{\ket{\varphi_1'}}}}>\frac{1}{\innerproduct{\varphi_1'}{\varphi_2'}}.
\end{equation}
In this situation, the optimal probabilities for successfully identifying the states $\ket{\varphi_1'}$ and $\ket{\varphi_2'}$ are given by~\cite{sugimoto}
\begin{eqnarray}\label{sol_Reduced}
p_{1|1}'&=&0,\nonumber\\
p_{2|2}'&=&1-|\innerproduct{\varphi_1'}{\varphi_2'}|^2=1-\frac{e^{-|\alpha|^2}\left(1-e^{-|\alpha|^2}\right)}{1-e^{-2|\alpha|^2}}.\ \ \ \ \ \
\end{eqnarray}
This means, in particular, that the optimal probabilities for the original three-state USD problem can be obtained as follows~\cite{sugimoto}
\begin{align}\label{optOrig}
p_{1|1}&=p_{1|1}'\expval{Q}{\varphi_1}=0,\nonumber\\
p_{2|2}&=p_{2|2}'\expval{Q}{\varphi_2},
\end{align}
where $Q=\openone-\ketbra{\varphi_1}$. Since $\expval{Q}{\varphi_2}=1-e^{-|\alpha|^2}$, we find that the optimal solution to the original problem is
\begin{eqnarray}\label{sol3_final}
q^{\rm s}_{\rm s}&=&0,\nonumber\\
q^{\rm s}_{\rm d}&=&\left(1-e^{-|\alpha|^2}\right)\left[1-\frac{e^{-|\alpha|^2}\left(1-e^{-|\alpha|^2}\right)}{1-e^{-2|\alpha|^2}}\right]\nonumber\\
&=&\tanh\left(\frac{|\alpha|^2}{2}\right), 
\end{eqnarray}
which again satisfies $p_{0|0}=p_{1|1}$. 

The uniqueness of the solution above is also proven in~\cite{sugimoto}. This concludes the proof of the Claim in Sec.~\ref{meas}. 

\section{Channel model}\label{app:sec_experiments_comparison}

For simplicity, we consider a lossy channel with transmittance $\eta_{\rm ch}=10^{-\alpha_{\rm att} L/10}$, where $\alpha_{\rm att}$ denotes its attenuation coefficient measured in dB/km and $L$ corresponds to the transmission distance measured in km. Moreover, we disregard any misalignment effect. This means that the expected gain at Bob's data line can be expressed as
\begin{equation}\label{gainExp_SM}
G=1-(1-p_{\rm d})\big[(1-f)e^{-\mu t_{\rm B}\eta_{\rm D}\eta_{\rm ch} }+f e^{-2\mu t_{\rm B} \eta_{\rm D} \eta_{\rm ch}} \big],
\end{equation} 
where $p_{\rm d}$ and $\eta_{\rm D}$ denote, respectively, the dark count rate and the detection efficiency of Bob's detector in the data line, $f$ is the probability that Alice emits a decoy signal $\ket{\varphi_2}$, $\mu=|\alpha|^2$ is the intensity of Alice's emitted coherent states, and $t_{\rm B}$ denotes the transmittance of Bob's beamsplitter.

Eq.~\eqref{gainExp_SM} can be interpreted as calculating the probability of not having a ``click'' at all in Bob's data line and subtracting this quantity from probability one. Note that the term $1-p_{\rm d}$ is the probability that there is no ``click'' in Bob's detector due to dark counts, $(1-f)e^{-\mu t_{\rm B}\eta_{\rm D}\eta_{\rm ch} }$ is the joint probability that Alice emits a signal state $\ket{\varphi_0}$ or $\ket{\varphi_1}$ and this signal does not produce a ``click'' at Bob's data line, and $f e^{-2\mu t_{\rm B} \eta_{\rm D} \eta_{\rm ch}}$ is the joint probability that Alice emits a decoy signal $\ket{\varphi_2}$ and there is no ``click'' at Bob's data line either.
 
To obtain $L_{\rm zero}$ from $G_{\rm zero}$ in Sec.~\ref{Sec:Evaluation}, one simply substitutes in Eq.~\eqref{gainExp_SM} $G$ with $G_{\rm zero}$ and then obtains the associated transmission distance $L$ (which now corresponds to $L_{\rm zero}$), given the experimental parameters $p_{\rm d}$, $\eta_{\rm D}$, $f$, $\mu$, $t_{\rm B}$ and $\alpha_{\rm att}$.

\newpage 

\end{document}